\documentclass{article}
\usepackage[utf8]{inputenc}
\usepackage[T2A]{fontenc}
\usepackage[english]{babel}
\inputencoding{utf8}

\usepackage{amsmath,amssymb,amsthm, nicefrac}
\usepackage{multicol}

\usepackage[
    pdfpagemode=UseNone,
    bookmarks=true,
    bookmarksopen=true,
    pdfpagelayout=SinglePage,
    pdfstartview={XYZ null null 1},
    pdfhighlight=/P,
    colorlinks=true,
    linkcolor=blue,
    citecolor=blue,
    urlcolor=blue
    ]{hyperref}
\pdfstringdefDisableCommands{\def\eqref#1{(\ref{#1})}}
\usepackage{bookmark}
\bookmarksetup{
  open,
  numbered,
  depth=5, 
}
\setcounter{secnumdepth}{5}

\usepackage{blkarray}
\usepackage{float}

\advance\textwidth 20mm  \advance\oddsidemargin -10mm
\advance\textheight 20mm \advance\topmargin -15mm

\def\C{\mathbb{C}}

\theoremstyle{plain}
\newtheorem{theorem}{Theorem}
\newtheorem{lemma}[theorem]{Lemma}
\newtheorem{proposition}{Proposition}
\newtheorem{corollary}{Corollary}
\newtheorem{definition}{Definition}
\newtheorem{assumption}{Assumption}
\newtheorem{remark}{Remark}
\newtheorem{example}{Example}

\DeclareMathOperator{\Mat}{Mat}

\DeclareMathOperator{\trace}{trace}

\newcommand{\brackets}[1]{\left( #1 \right)}

\DeclareMathOperator{\PI}{P_{1}}
\DeclareMathOperator{\PII}{P_{2}}

\DeclareMathOperator{\PIIIpr}{P_{3}^{\prime}}
\DeclareMathOperator{\PIV}{P_{4}}
\DeclareMathOperator{\PV}{P_{5}}
\DeclareMathOperator{\PVI}{P_{6}}

\newcommand{\PVIn}[1]{\text{P}_6^{ #1 }}
\newcommand{\PVn}[1]{\text{P}_5^{ #1 }}
\newcommand{\PIVn}[1]{\text{P}_4^{ #1 }}
\newcommand{\PIIIprn}[1]{\text{P}_3^{\prime \, #1 }}

\newcommand{\PIIn}[1]{\text{P}_2^{ #1 }}
\newcommand{\PIn}[1]{\text{P}_1^{ #1 }}

\newcommand{\HI}{H_{1}}
\newcommand{\HII}{H_{2}}
\newcommand{\HIII}{H_{3}}
\newcommand{\HIV}{H_{4}}
\newcommand{\HV}{H_{5}}
\newcommand{\HVI}{H_{6}}

\newcommand{\Painleve}{Painlev{\'e} }
\newcommand{\PPainleve}{Painlev{\'e}}

\makeatletter
\newcounter{savesection}
\newcounter{apdxsection}
\renewcommand\appendix{\par
  \setcounter{savesection}{\value{section}}%
  \setcounter{section}{\value{apdxsection}}%
  \setcounter{subsection}{0}%
  \gdef\thesection{\@Alph\c@section}}
\newcommand\unappendix{\par
  \setcounter{apdxsection}{\value{section}}%
  \setcounter{section}{\value{savesection}}%
  \setcounter{subsection}{0}%
  \gdef\thesection{\@arabic\c@section}}
\makeatother

\usepackage{mathtools}
\mathtoolsset{showonlyrefs,showmanualtags}

\allowdisplaybreaks

\title{Classification of Hamiltonian non-abelian Painlevé~type~systems}

\date{}

\author{I.A. Bobrova\thanks{National Research Univerisity Higher School of Economics, Moscow, Russian Federation.},~ V.V. Sokolov\thanks{L.D.~Landau Institute for Theoretical Physics, Chernogolovka, Russian Federation.} }

\usepackage{tikz}

\begin{document}
\maketitle

\begin{abstract}
All Hamiltonian non-abelian \Painleve systems of $\PI-\PVI$ type with constant coefficients are found. For $\PI-\PV$ systems, we replace an appropriate inessential constant parameter with a non-abelian constant.
To prove the integrability of new $\PIIIpr$ and $\PV$ systems thus obtained, we find isomonodromic Lax pairs for them. 
\medskip

\noindent{\small Keywords:  non-abelian ODEs, \Painleve equations, isomonodromic Lax pairs}
\end{abstract}

\section{Introduction}
\hspace{3.8mm} In the paper \cite{okamoto1980polynomial} by K.Okamoto all \Painleve equations $\PI-\PVI$ were written as polynomial Hamiltonian systems of the form 
\begin{gather} \label{Ham}
    \left\{
    \begin{array}{lcl}
       \displaystyle \frac{d u}{d z} 
         &=&\displaystyle \frac{\partial H}{\partial v},
         \\[3mm]
         \displaystyle \frac{d v}{d z}
         &=&\displaystyle -\frac{\partial H}{\partial u}
         .
    \end{array}
    \right.
\end{gather}
The Okamoto Hamiltonian for the $i$-th \Painleve system has the form $\frac{1}{f_i(z)} h_i$, where  
\begin{align}
    \label{h6}
    \begin{aligned}
    &\begin{aligned}
    h_6
    = u^3 v^2
    - u^2 v^2
    - \kappa_1 u^2 v
    + \kappa_2 u v
    - \kappa_3 u
    + z \left(
    - u^2 v^2
    + u v^2
    + \kappa_4 u v
    + (\kappa_1 - \kappa_2 - \kappa_4) v
    \right), 
    \\[1mm]
    f_6(z)
    = z (z-1),
    \end{aligned}
    \hspace{5mm}
    \\[2mm]
    &\begin{aligned}
    h_5
    &= u^3 v^2 
    - 2 u^2 v^2 
    + v^2 
    - \kappa_1 u^2 v
    + (\kappa_1 + \kappa_2) u v
    - \kappa_2 v
    - \kappa_3 u
    + \kappa_4 z u v, 
    &&&
    f_5(z)
    &= z,
    \end{aligned}
    \\[2mm]
    &\begin{aligned}
    h_4
    &= u v^2 - u^2 v
    + \kappa_2 v
    - \kappa_3 u 
    - 2 z u v,
    &&&
    f_4(z)
    &= 1,
    \end{aligned}
    \\[2mm]
    &\begin{aligned}
    h_3^{\prime}
    &= u^2 v^2 
    + \kappa_2 u^2 v
    + \kappa_1 u v 
    + \kappa_3 u 
    + \kappa_4 z v,
    &&&
    f_3(z)
    &= z,
    \end{aligned}
    \\[2mm]
    &\begin{aligned}
    h_2
    &= - u^2 v + \tfrac12 v^2 - \kappa_3 u - \tfrac12 z v,
    &&&
    f_2(z)
    &= 1,
    \end{aligned}
    \\[2mm]
    &\begin{aligned}
    h_1
    &= - 2 u^3 + \tfrac12 v^2 - z u,
    &&&
    f_1(z)
    &= 1.
    \end{aligned}
    \end{aligned}
\end{align}
Here  $\kappa_1$, $\kappa_2$, $\kappa_3$, $\kappa_4$ are arbitrary constants.
Note that since the systems are non-autonomous, the Hamiltonians are not integrals of motion.

In \cite{Kawakami_2015} H. Kawakami constructed Hamiltonian matrix generalizations of these \Painleve $\PI-\PVI$ systems. The corresponding Hamiltonian functions have the form  $\frac{1}{f_i(z)} {\rm trace}\,(H_i),$ where $H_i$ are (non-commutative) polynomials with constant coefficients in two matrices $u$ and $v$, linear in $z$. If the size of matrices is equal to one, these Hamiltonians coincide with the Okamoto's ones.

In this paper we never use matrix entries,   but operate only with polynomials in non-commutative variables $u,v$. More rigorously, we are dealing with ODEs in free associative algebra ${\cal A}=\C[u,v]$ with the unity $\bf 1$. The independent variable $z$ plays here the role of a parameter. The corresponding definitions of trace functional and non-abelian partial derivatives used in formula \eqref{Ham} are given in Section \ref{sec:nchamsys} (see also \cite{kontsevich1993formal}).

In Section \ref{HamPen} we find all Hamiltonian non-abelian systems of \Painleve type. Our classification is based on the following assumption:
\begin{assumption}
\label{ass:ncham}
\phantom{} For each non-abelian \Painleve system \eqref{Ham} of  type $k,$ $k=1,...,6$ there exist polynomials $S^{(i)}_k\in {\cal A}$ such that
\begin{itemize}
\item[\bf 1.]  $ \frac{1}{f_k(z)}\trace S^{(1)}_k$ is the Hamiltonian of the system{\rm;} 
\item[\bf 2.] the scalar reductions of polynomials $S^{(i)}_k$ coincide with the powers $h_k^i${\rm;} 
\item[\bf 3.]   $\trace S^{(i)}_k$ and $\trace S^{(j)}_k$ commute with each other with respect to the symplectic non-abelian Poisson bracket {\rm{(}}see Section {\rm{\ref{sec:nchamsys})}} for any $i,j$.
\end{itemize}
\end{assumption}

In fact, the result of our classification coincides with  the collection of the Kawakami systems. A small generalization is the presence of the additional parameters $\beta,\gamma$ in the systems $\PIIIpr,\PV,\PVI$. 
Note that the Kawakami systems are not invariant under the simplest B\"acklund transformations (see Appendix \ref{sec:ncP6}), since these transformations change not only the parameters $\kappa_i$, but also $\beta,\gamma$.

The parameters $\beta,\gamma$ are not essential in the following sense. Matrix systems of \Painleve type with scalar coefficients are invariant under conjugations $ u \to T u T^{-1}, \, v \to T v T^{-1} $ by an arbitrary nonsingular matrix $T$. The corresponding quotient system (that is, the system satisfied by the invariants of this action) does not depend on $\beta$, $\gamma$.

In Appendix \ref{sec:syshamlist} we provide a miscellaneous information of Hamiltonian non-abelian \Painleve systems including isomonodromoc Lax representations of the form
\begin{gather} \label{eq:zerocurvcond}
    \mathbf{A}_z - \mathbf{B}_{\lambda}
    = [\mathbf{B}, \mathbf{A}]
\end{gather}
and various links between systems. 

In \cite{Balandin_Sokolov_1998, Retakh_Rubtsov_2010,  Bobrova_Sokolov_2020_1} examples of matrix Hamiltonian  $\PI$, $\PII$, and $\PIV$ systems with non-abelian (but not scalar) coefficients were found. In Section \ref{NonAb} we find $\PIIIpr$ and $\PV$ systems with one non-abelian parameter. To prove their integrability, we present isomonodromic Lax pairs 
for them. Appendix \ref{degdata_h} contains explicit formulas for the degenerations into each other of Hamiltonian systems with non-abelian parameter and their Lax pairs.

\section{Non-abelian Hamiltonian ODEs}
\label{sec:nchamsys}

\hspace{3.8mm} In this section we define the basic concepts related to non-abelian Hamiltonian systems (see \cite{kontsevich1993formal,Odesskii_Sokolov_2021}).
These systems have the form
\begin {equation} \label{geneqm}
\frac {d x _ {\alpha}} {d z} = F _ {\alpha} ({\bf x}), \qquad {\bf x} = (x_1, ..., x_N),
\end {equation}
where $ x_1, \dots, x_N $ are generators of the free associative algebra $ {\cal A} $ over $ \C $.  In fact, \eqref{geneqm}  is the notation for the derivation $d_z$ of the algebra $ {\cal A} $ such that $d_z (x_i) = F_i $. For any element $g \in {\cal A}$, the element $d_z (g)$ is uniquely determined by the Leibniz rule. Sometimes we use the notation $\frac{d}{dz}$ instead of $d_z.$
 
In the matrix case $x_i(z) \in \Mat_m (\mathbb{C})$, the scalar first integrals of   systems  \eqref{geneqm} have the form $\trace (f(x_1,...,x_N))$. Their generalization to the non-abelian case is the elements of the quotient vector space  ${\cal A} / [{\cal A}, \, {\cal A}]$.
If $a-b\in [{\cal A}, \, {\cal A}],$ we write $a\sim b.$  We denote by $\trace a$ the equivalence class of element $a \in {\cal A}$ in ${\cal A} / [{\cal A}, \, {\cal A}]$.

\begin{definition} An element $\trace \rho \in {\cal A} / [{\cal A}, \, {\cal A}]$ is called a first integral of system \eqref{geneqm} if $ d_z(\rho)~\sim~0.$
\end{definition}

In this paper we consider the case $N=2$ and denote $x_1=u,\, x_2=v.$ It is assumed that the Poisson brackets are generated by the canonical symplectic structure.  
The Hamiltonian system corresponding to the Hamiltonian $H$ has the form \eqref{Ham}, where
$H\in {\cal A}$ and $\frac{\partial}{\partial u},\, \frac{\partial}{\partial v}$ are non-abelian partial derivatives. Non-abelian derivatives of an arbitrary polynomial $f\in {\cal A}$ are defined by the identity
 $$df=\frac{\partial f}{\partial u} du+\frac{\partial f}{\partial v}dv, $$
where it is assumed that additional non-abelian symbols $du$, $dv$ are moved to the right by cyclic permutations of generators in monomials\footnote{Such an operation is equivalent to adding a commutator.}. Notice that in the non-abelian case the partial derivatives are not vector fields.

\begin{example} Let $f=u^2vuv$. We have $df=du\,uvuv+u\,du\,vuv+u^2\,dv\,uv+u^2v\,du\,v+u^2vu\, dv$. 
Performing cyclic permutations in monomials so as to move $du,~dv$ to the end of each monomial, we get   $uvuv\,du+vuvu\,du+uvu^2\,dv+vu^2v\,du+u^2vu\,dv$. Therefore,  
 $\frac{\partial f}{\partial u}=uvuv+vuvu+vu^2v,~\frac{\partial f}{\partial v}=uvu^2+u^2vu$.
 \end{example}

 \begin{remark}
 \label{rem:partder}
 It is easy to verify that $\frac{\partial}{\partial x_i} (a b- b a)=0$ for any $a,b\in  {\cal A}$ and therefore the non-abelian partial derivatives are well-defined mappings from $ {\cal A}$ to ${\cal A} / [{\cal A}, \, {\cal A}]$. The Hamiltonian of a non-abelian Hamiltonian system \eqref{Ham} is the equivalence class of the polynomial $H\in {\cal A}$ in ${\cal A} / [{\cal A}, \, {\cal A}]$ and therefore $H$ is defined up to a linear combination of commutators.
 \end{remark}
 
\begin{lemma}\label{thm:traceint} For any system of the form \eqref{Ham} the element $I=u v - v u$ is a non-abelian constant of motion{\rm:} $d_z(I)=0$.  
\end{lemma}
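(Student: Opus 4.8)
The plan is to compute $d_z(I)$ directly using the Hamiltonian equations of motion \eqref{Ham} together with the Leibniz rule, and then show the resulting expression is identically zero in $\mathcal{A}$ (not merely in $\mathcal{A}/[\mathcal{A},\mathcal{A}]$). Writing $\dot u = \partial H/\partial v$ and $\dot v = -\partial H/\partial u$, the Leibniz rule gives
\begin{gather}
d_z(I) = \dot u\, v + u\, \dot v - \dot v\, u - v\, \dot u
= \left[\frac{\partial H}{\partial v},\, v\right] - \left[\frac{\partial H}{\partial u},\, u\right].
\end{gather}
So the whole statement reduces to the algebraic identity $\bigl[\tfrac{\partial H}{\partial u},\, u\bigr] = \bigl[\tfrac{\partial H}{\partial v},\, v\bigr]$ for every $H \in \mathcal{A} = \C[u,v]$, which I would establish for monomials and extend by linearity.

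The key step is therefore an identity about non-abelian partial derivatives. I would prove it by induction on the degree of a monomial $w$ in $u$ and $v$, or more slickly by exploiting the defining relation $dH = \tfrac{\partial H}{\partial u}\,du + \tfrac{\partial H}{\partial v}\,dv$ where the symbols $du,dv$ are pushed to the right modulo commutators. The cleanest route: introduce the "Euler-type" derivation or, better, use the observation that in the formal expression $dH$ one may substitute $du \mapsto u$, $dv \mapsto v$ (this is a legitimate algebra map from the bimodule of differentials into $\mathcal{A}$ after the cyclic-reordering convention is applied), obtaining an element $E(H) := \tfrac{\partial H}{\partial u}u + \tfrac{\partial H}{\partial v}v \in \mathcal{A}$. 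Tracking the cyclic reordering carefully, one checks that for a monomial of degree $n$ this operation recovers $n$ copies of the monomial up to commutators, and more precisely that the \emph{difference} $\tfrac{\partial H}{\partial u}\,u - u\,\tfrac{\partial H}{\partial u}$ telescopes against $\tfrac{\partial H}{\partial v}\,v - v\,\tfrac{\partial H}{\partial v}$ monomial by monomial. Concretely, for a single monomial $w = x_{i_1}\cdots x_{i_n}$ with each $x_{i_j}\in\{u,v\}$, differentiating splits $w$ at each slot; moving the differential symbol to the end by a cyclic permutation and then right-multiplying by the corresponding generator $x_{i_j}$ reproduces the cyclic rotation of $w$ that brings $x_{i_j}$ to the front, i.e. it gives $x_{i_j}\,(\text{rest}) $ conjugated around. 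Summing over all slots, the commutators $[\partial_u w, u]$ and $[\partial_v w, v]$ both collapse to the same telescoping sum $\sum_j (\text{rotation}_{j} - \text{rotation}_{j-1})$ of cyclic rotations of $w$, hence agree.

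The main obstacle — really the only delicate point — is bookkeeping the cyclic-permutation convention correctly so that the two telescoping sums genuinely coincide rather than merely coinciding modulo commutators; one must verify the identity at the level of $\mathcal{A}$, since Lemma~\ref{thm:traceint} asserts $d_z(I)=0$ exactly, not just $d_z(I)\sim 0$. I would make this rigorous by fixing the convention "replace $du$ in a monomial $m_1\,du\,m_2$ by $m_2\,m_1$" (equivalently $m_1\,du\,m_2 \sim du\,m_2 m_1$, then drop $du$), computing $[\partial_u w, u]$ and $[\partial_v w, v]$ on a general monomial $w$, and observing both equal $w' - w''$ where $w'$ is the sum of rotations of $w$ starting at positions right after a $u$ (resp.\ after a $v$) minus those ending there, which rearrange into the identical expression. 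A sanity check on the running example $f=u^2vuv$ and on a couple of low-degree monomials (e.g.\ $uv$, $u^2v$) confirms the cancellation before writing the general induction. Once the identity $[\partial_u H, u] = [\partial_v H, v]$ is in hand, the lemma follows immediately from the displayed computation of $d_z(I)$.
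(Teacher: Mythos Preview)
Your reduction is exactly the paper's: compute $d_z(I)$ by Leibniz and the equations of motion, then invoke an algebraic identity for non-abelian partial derivatives. The paper simply quotes that identity in the form $\sum_i [x_i,\partial g/\partial x_i]=0$ with a reference to Kontsevich, whereas you sketch a direct telescoping proof of it; so in spirit the approaches coincide, with yours supplying the omitted verification.

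There is, however, a sign slip in your displayed line. With $\dot u=\partial_v H$ and $\dot v=-\partial_u H$ one gets
\[
d_z(I)=\dot u\,v+u\,\dot v-\dot v\,u-v\,\dot u
=[\partial_v H,\,v]+[\partial_u H,\,u],
\]
so the identity you need is $[\partial_u H,u]+[\partial_v H,v]=0$, i.e.\ $[\partial_u H,u]=-[\partial_v H,v]$, not $[\partial_u H,u]=[\partial_v H,v]$. This propagates into your description of the telescoping: for a monomial $w=x_{i_1}\cdots x_{i_n}$ the slot $j$ contributes $\mathrm{rot}_j w-\mathrm{rot}_{j-1}w$ to $[\partial_{x_{i_j}}w,\,x_{i_j}]$, and it is the sum over \emph{all} slots (the $u$-slots and the $v$-slots together) that telescopes to $\mathrm{rot}_n w-\mathrm{rot}_0 w=0$. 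The two partial sums $[\partial_u w,u]$ and $[\partial_v w,v]$ are not ``the same telescoping sum''; they are complementary pieces that add to zero. Once you correct the sign and this sentence, your argument is complete and matches the paper's.
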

\begin{proof}
This statement follows from the basic identity 
\begin{equation}\label{comid}
\sum_{1\leq i\leq N}\left[x_i,\frac{\partial g}{\partial x_i}\right]=0
\end{equation}
for non-abelian partial derivatives, which is true for any $g\in {\cal A}$  \text{\rm(}see
\text{\rm\cite{kontsevich1993formal})}.
\end{proof}

\begin{corollary}\label{color1} For any system of the form \eqref{Ham} the elements ${\rm trace}(u^2 v^2-u v u v)$ and ${\rm trace}(v u^2 v u v - v u v u^2 v)$ are first integrals.
\end{corollary}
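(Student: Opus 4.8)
The plan is to deduce both statements directly from Lemma~\ref{thm:traceint} by recognizing the two elements as traces of powers of $I=uv-vu$. Since $d_z$ is a derivation of ${\cal A}$ and $d_z(I)=0$ by Lemma~\ref{thm:traceint}, the Leibniz rule gives $d_z(I^k)=0$ for every $k\ge 1$; in particular $\trace(I^k)$ is a first integral of \eqref{Ham} for all $k$. (Equivalently, the classes $\trace(I^k)$ are Casimirs of the canonical symplectic non-abelian Poisson bracket, hence conserved along every Hamiltonian flow.) It therefore suffices to express $u^2v^2-uvuv$ and $vu^2vuv-vuvu^2v$ through $I^2$ and $I^3$ modulo $[{\cal A},{\cal A}]$.

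For the quadratic element I would expand $I^2=(uv-vu)^2=uvuv-uv^2u-vu^2v+vuvu$ and use that a cyclic permutation of a monomial changes it only by a commutator, so that $\trace(uv^2u)=\trace(vu^2v)=\trace(u^2v^2)$ and $\trace(vuvu)=\trace(uvuv)$. This yields $\trace(I^2)=-2\,\trace(u^2v^2-uvuv)$, hence the first element equals $-\tfrac12\trace(I^2)$ and is a first integral.

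For the cubic element the same idea applies, only the algebra is longer but entirely routine. Expanding $I^3=(uv-vu)^3$ produces eight monomials of degree $6$; collecting them into the (four) cyclic classes of words with three $u$'s and three $v$'s, one finds that only two classes survive and $\trace(I^3)=3\,\trace(u^2v^2uv-u^2vuv^2)$. On the other hand, cyclic permutations give $\trace(vu^2vuv)=\trace(u^2vuv^2)$ and $\trace(vuvu^2v)=\trace(u^2v^2uv)$, so $\trace(vu^2vuv-vuvu^2v)=\trace(u^2vuv^2-u^2v^2uv)=-\tfrac13\trace(I^3)$, again a first integral.

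The only mildly delicate point is the bookkeeping in the cubic case: one must keep track of the signs while collapsing the eight words of $I^3$ onto cyclic classes, and check that the two words $vu^2vuv$ and $vuvu^2v$ fall precisely into the two classes occurring in $\trace(I^3)$. Everything else is immediate from Lemma~\ref{thm:traceint} and the Leibniz rule for $d_z$.
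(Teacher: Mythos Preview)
Your proof is correct and is essentially the same as the paper's: the paper simply states that $u^2v^2-uvuv\sim -\tfrac12[u,v]^2$ and $vu^2vuv-vuvu^2v\sim -\tfrac13[u,v]^3$ and then invokes Lemma~\ref{thm:traceint}, which is exactly the reduction you carry out in detail. Your expansion of $I^3$ and the identification of the cyclic classes are accurate, so there is nothing to add.
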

\begin{proof} It is easy to check that 
$$
u^2 v^2-u v u v \sim - \frac{1}{2} [u,\, v]^2, \qquad v u^2 v u v - v u v u^2 v \sim -\frac{1}{3} [u,\, v]^3
$$
and therefore the statement follows from Lemma \ref{thm:traceint}.
 \end{proof}

\section{Hamiltonian non-abelian \Painleve systems with constant coefficients}\label{HamPen}

\hspace{3.8mm} For a non-abelian \Painleve system of type $i$, denote by $H_i$ the non-abelian polynomial $S_i^{(1)}$ from Assumption \ref{ass:ncham} that we intend to find.
The Hamiltonian for the \Painleve system is defined by the formula
\begin{equation}\label{HAMHAM}
H_i^{ \rm H}=\frac{1}{f_i(z)} {\rm trace}\, (H_i).
\end{equation}  
By definition, $H_i$ should coincide with the polynomials $h_i$ defined by \eqref{h6} under the commutative reduction $u v = v u.$ Taking into account Remark \ref{rem:partder}, it is easy to check that the general ansatz for such non-commutative polynomials can be chosen as follows:
\begin{align}
    \label{hamPVI}
    &\begin{aligned}
    \HVI 
    = \alpha u^3 v^2 
    + (1-\alpha) u^2 v u v
    + \beta u^2 v^2
    - (1+\beta) u v u v
    - \kappa_1 u^2 v
    + \kappa_2 u v 
    - \kappa_3 u
    \hspace{1.3cm}
    \\[2mm]
    + \, z\Big(\gamma u^2 v^2
    - (1+\gamma) u v u v
    + u v^2
    + \kappa_4 u v
    + (\kappa_1-\kappa_2-\kappa_4) v\Big)
    ,
    \end{aligned}
    \\[4mm]
    \label{hamPV}
    &\begin{aligned}
    \HV = \alpha u^3 v^2 
    + (1-\alpha) u^2 v u v
    + \beta u^2 v^2
    - (2+\beta) u v u v
    + u v^2 
    - \, \kappa_1 u^2 v
    + (\kappa_1+\kappa_2) u v
    \\[2mm]
    - \kappa_2 v
    - \kappa_3 u 
    + z \kappa_4 u v
    ,
    \end{aligned}
    \\[4mm]
    \label{hamPIV}
    &\begin{aligned}
    \HIV
    = u v^2 
    - u^2 v 
    + \kappa_2 v
    -\kappa_3 u 
    - 2 z u v,
    \end{aligned}
    \\[4mm]
    \label{hamPIII'}
    &\begin{aligned}
    \HIII'
    = \beta u^2 v^2
    + (1-\beta) u v u v 
    + \kappa_2 u^2 v 
    + \kappa_1 u v
    + \kappa_3 u 
    + z \kappa_4 v,
    \end{aligned}
    \\[4mm]
    \label{hamPII}
    &\begin{aligned}
    \HII
    = - u^2 v 
    + \tfrac12 v^2 
    - \kappa_3 u 
    - \tfrac12 z v
    ,
    \end{aligned}
    \\[4mm]
    \label{hamPI}
    &\begin{aligned}
    \HI
    = - 2 u^3 
    + \tfrac12 v^2 
    - z u.
    \end{aligned}
\end{align}
 
The existence of a polynomial $S_i^{(2)}$ satisfying Assumption \ref{ass:ncham} allows one to find the unknown coefficients in these ansatzes. Due to Corollary \ref{color1} only the constant $\alpha$ needs to be defined.

\begin{proposition} 
\label{thm:S2ints}
If there exists a polynomial $S_i^{(2)}$ satisfying Assumption {\rm\ref{ass:ncham}}, then  the parameter $\alpha$ in \eqref{hamPV} and \eqref{hamPVI} is equal to zero. Other constants remain to be arbitrary.
\end{proposition}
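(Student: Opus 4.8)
The plan is to work directly from the ans\"atze. Among \eqref{hamPVI}--\eqref{hamPI} only $\HVI$ and $\HV$ contain the coefficient $\alpha$, so the statement concerns $\PVI$ and $\PV$; I would carry out $\PVI$ in full and obtain $\PV$ by the same computation or through the degeneration $\PVI\to\PV$. The key preliminary remark is that, in all of the ans\"atze, every parameter \emph{other} than $\alpha$ --- that is, $\beta$ in \eqref{hamPVI}, \eqref{hamPV}, \eqref{hamPIII'} and $\gamma$ in \eqref{hamPVI} --- multiplies the single element $u\,[u,v]\,v = u^{2}v^{2}-uvuv$, whose trace is, by Corollary~\ref{color1}, a first integral of \emph{every} system of the form \eqref{Ham}, hence a Casimir of the bracket. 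Changing $\beta$ or $\gamma$ therefore only shifts $\trace H_{i}$ by a Casimir, which affects neither the Hamiltonian flow nor the commutation relations of Assumption~\ref{ass:ncham}; so these parameters are never constrained, and only $\alpha$ needs to be determined. Moreover $H_{i}$ depends on $\alpha$ affinely, $H_{i}=H_{i}\big|_{\alpha=0}+\alpha\,u^{2}[u,v]\,v$, and the term $u^{2}[u,v]v = u^{3}v^{2}-u^{2}vuv$ has $(u,v)$-bidegree $(3,2)$, not of the form $(k,k)$, so --- unlike the $\beta,\gamma$-terms --- it is \emph{not} a Casimir and genuinely deforms the flow.

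Next I would fix the shape of the unknown $S_{i}^{(2)}$. By item~2 of Assumption~\ref{ass:ncham} its commutative reduction equals $h_{i}^{2}$, and by Remark~\ref{rem:partder} only its class modulo $[{\cal A},{\cal A}]$ enters the bracket; so, after choosing a normal ordering inside each cyclic word, write $S_{i}^{(2)}\sim P_{i}+\sum_{j=1}^{M}c_{j}w_{j}$, where $P_{i}$ is one fixed non-commutative lift of $h_{i}^{2}$, the $w_{j}$ form a basis of the (finite-dimensional) kernel of the commutative reduction in the relevant $(u,v)$-bidegrees and $z$-degree $\le 2$, and $c_{1},\dots,c_{M}\in\C$ are unknowns. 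Then impose item~3 with $S_{i}^{(1)}=H_{i}$, i.e.\ $\{\trace H_{i},\trace S_{i}^{(2)}\}=0$, using the formula $\{\trace f,\trace g\}=\trace\bigl(\tfrac{\partial f}{\partial u}\tfrac{\partial g}{\partial v}-\tfrac{\partial f}{\partial v}\tfrac{\partial g}{\partial u}\bigr)$ from Section~\ref{sec:nchamsys} (well defined thanks to \eqref{comid}), expand in ${\cal A}/[{\cal A},{\cal A}]$, and set to zero the coefficient of every independent cyclic word and every power of $z$. This produces an inhomogeneous linear system for $(c_{1},\dots,c_{M})$ whose entries are polynomials in $\alpha,\beta,\gamma$ and the $\kappa$'s; since the commutative reduction of the bracket is $\{h_{i},h_{i}^{2}\}_{\mathrm{comm}}=0$ identically, the whole system is a ``quantum'' constraint, and the assertion to prove is that it is solvable precisely when $\alpha=0$, and then for arbitrary values of the remaining parameters.

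To keep the computation manageable I would organise it by $(u,v)$-bidegree and by powers of $z$, starting from the top, where the whole $\alpha$-dependence of $H_{i}$ is carried by the single term $u^{2}[u,v]v$ and the non-commutative part of the bracket is short enough to handle by hand. Collecting all coefficients, the compatibility conditions of the linear system are found to collapse to the single equation $\alpha=0$; and once $\alpha=0$ is imposed, the remaining equations can be solved for $(c_{1},\dots,c_{M})$ with no condition on $\beta,\gamma$ or the $\kappa$'s, producing an explicit $S_{i}^{(2)}$ and thereby establishing that the other constants stay free. I expect the main obstacle to be purely the bulk of the middle step: for $\PVI$ the bracket reaches $(u,v)$-bidegree $(8,5)$ and $z$-degree $3$, so the list of cyclic words to track is long; the only genuinely conceptual ingredient is the observation above that the parameter-dependence splits into a Casimir part ($\beta,\gamma$), irrelevant to Assumption~\ref{ass:ncham}, and the single genuine deformation $\alpha\,u^{2}[u,v]v$, whose presence obstructs item~3 unless $\alpha=0$.
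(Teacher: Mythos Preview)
Your proposal is correct and follows essentially the same approach as the paper: fix an ansatz for $S_i^{(2)}$ as a non-commutative lift of $h_i^{2}$ with undetermined coefficients, impose $\{\trace H_i,\trace S_i^{(2)}\}=0$, and read off $\alpha=0$ from the compatibility conditions of the resulting overdetermined linear system; the paper illustrates this machinery on $\PIV$ before asserting the outcome for $\PV$ and $\PVI$, while you proceed directly to $\PVI$. Your Casimir argument for why $\beta,\gamma$ remain free is precisely the content of the paper's remark, made just before the Proposition, that by Corollary~\ref{color1} only $\alpha$ needs to be determined.
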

\begin{proof}
Using as an example the case of a non-abelian system $\PIV$, we will demonstrate how to find the polynomial $S_4^{(2)}$. The Hamiltonian of the scalar system $\PIV$ is given by  
\begin{align}
    h_4
    &= u v^2 - u^2 v + \kappa_2 v - \kappa_3 u - 2 z u v,
\end{align}
and formula \eqref{hamPIV} defines its non-abelian generalization $S_4^{(1)}$.
The polynomial $h_4^2$ is equal to 
\begin{align*}
    u^4 v^2 
    - 2 u^3 v^3 
    + u^2 v^4 
    - 2 (\kappa_2 + \kappa_3)  u^2 v^2
    + 2 \kappa_2 u v^3
    + 2 \kappa_3 u^3 v
    + \kappa_2^2 v^2 
    + \kappa_3^2 u^2
    \\[1mm]
    - \, 2 u v \kappa_2 \kappa_3 
    + z \brackets{
    4 u^3 v^2 
    - 4 u^2 v^3 
    - 4 \kappa_2 u v^2
    + 4 \kappa_3 u^2 v
    }
    + 4 z^2 u^2 v^2.
\end{align*}
Let us  construct a general anzats for $S_4^{(2)}$. It is easy to verify that the sets of monomials
\begin{align*}
    \{v u^3 v u, \, v u^2 v u^2, \, v u^4 v\},
    &&
    \{v u^2 v u v, \, v u v u^2 v, \, v u v u v u,  \, v u^3 v^2\},
    &&
    \{v u v u v^2, \, v u v^2 u v, \, v u^2 v^3\}
\end{align*}
define bases in the vector subspaces of homogeneous polynomials of degrees (4,2),\,(3,3) and (2,4)  in $u$ and $v$  projected onto the quotient space ${\cal A} / [{\cal A }, \, {\cal A}]$. 
 Therefore, the leading part of the ansatz can be taking as follows:
$$\begin{array}{r}
    a_1 v u^3 v u 
    + a_2 v u^2 v u^2 
    + (1 - a_1 - a_2) v u^4 v 
    + b_1 v u^2 v u v 
    + b_2 v u v u^2 v 
    + b_3 v u v u v u 
    \\[3mm]
    + \, (
    - 2 - b_1 - b_2 - b_3
    ) v u^3 v^2
    + c_1 v u v u v^2 
    + c_2 v u v^2 u v
    + (
    1 - c_1 - c_2
    ) v u^2 v^3.
    \end{array}
$$
A similar consideration for lower terms leads to 
\begin{align*}
    S_4^{(2)}
    = a_1 v u^3 v u 
    + a_2 v u^2 v u^2 
    + (1 - a_1 - a_2) v u^4 v 
    + b_1 v u^2 v u v 
    + b_2 v u v u^2 v 
    + b_3 v u v u v u 
    \\ 
    + \, (
    - 2 - b_1 - b_2 - b_3
    ) v u^3 v^2
    + c_1 v u v u v^2 
    + c_2 v u v^2 u v
    + (
    1 - c_1 - c_2
    ) v u^2 v^3
    \\
    + \, d_1 v u v u 
    + (-2 (\kappa_2 + \kappa_3) - d_1) v u^2 v 
    + 2 \kappa_2 v u v^2 
    + 2 \kappa_3 v u^3
    + \kappa_2^2 v^2 
    + \kappa_3^2 u^2
    \\
    - \, 2 \kappa_2 \kappa_3 v u 
    + z \left(
    e_1 v u^2 v u 
    + (4 - e_1) v u^3 v 
    + f_1 v u v u v
    + (-4 - f_1) v u^2 v^2 
    \right.
    \\
    \left.
    - \, 4 \kappa_2 v u v 
    + 4 \kappa_3 v u^2
    \right)
    + z^2 \brackets{
    g_1 v u v u
    + (4 - g_1) v u^2 v 
    }.
\end{align*}
The condition\footnote{Here the independent variable $z$ plays the role of a parameter.}
\begin{equation}
   \trace \Big(\frac{\partial S_4^{(2)}}{\partial u} \,\, \frac{\partial S_4^{(1)}}{\partial v}
    - \frac{\partial S_4^{(2)}}{\partial v} \,\, \frac{\partial S_4^{(1)}}{\partial u}\Big)
    = 0
\end{equation}
is equivalent to an overdetermined algebraic system for the coefficients of $H_4$ and $S_4^{(2)}$. Solving it, we obtain
\begin{align*}
    a_1
    &= 0, 
    &
    a_2
    &= 1,
    &
    b_2 
    &= - 2 - b_1,
    &
    b_3
    &= 0,
    &
    c_1
    &= 0,
    &
    c_2
    &= 1,
    &
    e_1
    &= 4,
    &
    f_4
    &= - 4.
\end{align*}
Therefore, the polynomial $S_4^{(2)}$ has the form
\begin{align*}
    S_4^{(2)}
    = v u^2 v u^2 
    - 2 v u v u^2 v 
    + v u v^2 u v
    - 2 (\kappa_2 + \kappa_3) v u^2 v 
    + 2 \kappa_2 v u v^2 
    + 2 \kappa_3 v u^3
    + (\kappa_2 v - \kappa_3 u)^2
    \\
    + \, z \left(
    4 v u^2 v u 
    - 4 v u v u v
    - \, 4 \kappa_2 v u v 
    + 4 \kappa_3 v u^2
    \right)
    + 4 z^2 v u^2 v
    + b_1 \brackets{
    v u^2 v u v - v u v u^2 v
    }
    \\
    - \, (d_1 + g_1 z^2) \brackets{
    v u^2 v - v u v u
    }
    .
\end{align*}
The presence of the constants $b_1, d_1$ and $g_1$ is explained by Corollary \ref{color1}. 

Similar calculations in the cases of $\PV$ and $\PVI$ are more laborious. The algebraic system contains the parameter $\alpha$ and it's compatibility conditions give rise to the equality $\alpha=0$. 
\end{proof}
 
In Appendix \ref{sec:syshamlist} we present isomonodromic Lax representations for the equations of motions defined by the non-abelian Hamiltonians and explicit expressions for limiting transitions connecting Hamiltonian systems. 

\section{Systems with non-abelian constants}\label{NonAb}
\hspace{3.8mm}

In the paper \cite{Balandin_Sokolov_1998} the matrix Painlev\'e-1 equation with an arbitrary constant matrix (or non-abelian constant) $h$ arose. It can be written as the system\footnote{Formally speaking, scalar constants like $z$ or $\kappa_i$ in the formulas below should be replaced by $z\,{\bf 1},$ $\kappa_i\, {\bf 1}$,~etc.}
\begin{equation}
    \left\{
    \begin{array}{lcl}
         u'
         &=& v,
         \\[2mm]
         v'
         &=& 6 u^2 + z + h.
    \end{array}
    \right.
    \end{equation}
A similar Painlev\'e-2 system,
\begin{equation}
    \left\{
    \begin{array}{lcl}
         u'
         &=& - u^2 + v - \tfrac12 z - h,
         \\[2mm]
         v'
         &=& v u + u v + \kappa_3,
    \end{array}
    \right.
    \end{equation}
can be extracted  from \cite{Retakh_Rubtsov_2010}\footnote{To do this, one has to replace the non-commutative independent variable $\tilde z$ by $z\, \mathbf{1} + 2 h$, where $h$ is an arbitrary non-abelian constant.}, while a system of Painlev\'e-4 type
\begin{equation}
    \label{eq:ncsysP4_nc}
    \left\{
    \begin{array}{lcl}
     u' 
     &=& 
     - u^2 
     + u v + v u
     - 2 z u
     + h u
     + \kappa_2,
     \\[2mm]
     v'
     &=& 
     - v^2
     + v u + u v
     + 2 z v
     - v h
     + \kappa_3,
    \end{array}
    \right.
\end{equation}
with an arbitrary constant matrix $h$ was found in  \cite{Bobrova_Sokolov_2020_1}. All these systems are Hamiltonian in the sense of Section \ref{sec:nchamsys}. The Hamiltonians are given by the formulas
\begin{align}
    \label{hamPIV_nc}
    &\begin{aligned}
    \HIV^{\rm{H}}
    = u v^2 
    - u^2 v 
    + \kappa_2 v
    -\kappa_3 u 
    - 2 z u v
    + h \, u v
    ,
    \end{aligned}
    \\[4mm]
    \label{hamPII_nc}
    &\begin{aligned}
    \HII^{\rm{H}}
    = - u^2 v 
    + \tfrac12 v^2 
    - \kappa_3 u 
    - \tfrac12 z v
    - h \, v
    ,
    \end{aligned}
    \\[4mm]
    \label{hamPI_nc}
    &\begin{aligned}
    \HI^{\rm{H}}
    = - 2 u^3 
    + \tfrac12 v^2 
    - z u
    - h \, u
    ,
    \end{aligned}
\end{align}
where $h \in \mathcal{A}$. Note that after the transition to scalar variables in these systems, the constant $h$ can be reduced to zero by a shift $z$. In the non-abelian case such a shift is impossible because $z$ is a scalar variable. It is a non-trivial fact that the above three systems possess isomonodromic Lax pairs. There are non-abelian non-Hamiltonian Painlev\'e systems of $\PII$ and $\PIV$ type \cite{Adler_Sokolov_2020,Bobrova_Sokolov_2020_1} with non-abelian coefficients that cannot be constructed using this simple trick.

In this section we present non-abelian Hamiltonian Painlev\'e systems of $\PIIIpr$ and $\PV$ type with one arbitrary non-abelian coefficient. In the scalar systems corresponding to them, the parameter $h$ can be reduced to one by a scaling of the variable $z$.
 
For the constructed non-abelian systems we were able to find isomonodromic Lax representations only for special values of the parameter $\beta$. This is not surprising, since the arbitrariness of the parameters $\beta, \gamma$ in non-abelian Hamiltonians is related to their gauge invariance under conjugations, which is absent in the case of systems with non-abelian coefficients.
 
The scalar \PPainleve-$3^{\prime}$ system depends on two essential parameters while other parameters can be normalized by scallings. In particular, 
in the case $\kappa_4 \neq 0$ we can reduce $\kappa_4$ to 1 by using a scaling of $z$.  After the formal replacement of the scalar parameter $\kappa_4$ by $h$, the polynomial \eqref{hamPIII'} becomes
\begin{align}
    H_3^\prime
    &= \beta u^2 v^2 + (1 - \beta) u v u v
    + \kappa_2 u^2 v
    + \kappa_1 u v 
    + \kappa_3 u
    + z \, h v.
\end{align}
The corresponding Hamiltonian system has the form
\begin{gather}
    \label{eq:ncsysP3_nc}
    \left\{
    \begin{array}{lcl}
         z \, u'
         &=& 2 u v u
         + \beta [u, [u, v]]
         + \kappa_1 u
         + \kappa_2 u^2 + z h,
         \\[2mm]
         z \, v'
         &=& - 2 v u v 
         + \beta [v, [u, v]]
         - \kappa_1 v
         - \kappa_2 u v - \kappa_2 v u - \kappa_3.
    \end{array}
    \right.
\end{gather}
If $\beta=0$ this system has the isomonodromic Lax pair with 
\begin{align}
    \mathbf{A} (\lambda, z)
    &= 
    \begin{pmatrix}
    0 & 0 \\[0.9mm]
    0 & - z \, h
    \end{pmatrix}
    -
    \begin{pmatrix}
    \kappa_1
    & 
    u
    \\[0.9mm]
    u v^2
    + \kappa_1 v
    + \kappa_2 u v 
    + \kappa_3
    & - u v + v u
    \end{pmatrix}
    \lambda^{-1} 
    +
    \begin{pmatrix}
    v + \kappa_2
    & 
    - 1
    \\[0.9mm]
    v^2 + \kappa_2 v
    &
    - v
    \end{pmatrix}
    \lambda^{-2},
    \\
    \mathbf{B} (\lambda, z)
    &= 
    \begin{pmatrix}
    0 & 0 
    \\[0.9mm]
    0 & -h
    \end{pmatrix}
    \lambda 
    + z^{-1}
    \begin{pmatrix}
    u v + v u
    + \kappa_2 u 
    &
    - u
    \\[0.9mm]
    - \brackets{
    u v^2 
    + \kappa_1 v
    + \kappa_2 u v + \kappa_3
    }
    & 
    0
    \end{pmatrix}
    + z^{-1} [u, v] \, \mathbf{I}
    .
\end{align}

The same trick produces a $\PV$-type Painlev\'e system with the arbitrary non-abelian constant $h$. Replacing the parameter $\kappa_4$  in \eqref{hamPV} with $h$, we get the polynomial
\begin{gather} 
    H_5
    = u^2 v u v 
    + \beta u^2 v^2 - (2 + \beta) u v u v
    + u v^2
    - \kappa_1 u^2 v
    + (\kappa_1 + \kappa_2) u v
    - \kappa_2 v
    - \kappa_3 u
    + z h \, u v, 
\end{gather}
which defines the following Hamiltonian system
\begin{gather}
    \label{eq:ncsysP5_nc}
    \left\{
    \begin{array}{lcr}
         z \, u'
         &=& u^2 v u + u v u^2
         - 4 u v u
         + \beta [u, [u, v]]
         - \kappa_1 u^2 + u v + v u 
         + (\kappa_1 + \kappa_2) u
         \\[1mm]
         && 
         - \, \kappa_2
         + z \, h u
         ,
         \\[2mm]
         z \, v'
         &=& - u v u v - v u^2 v - v u v u
         + 4 v u v 
         + \beta [v, [u, v]]
         + \kappa_1 u v + \kappa_1 v u - v^2
         \hspace{0.4cm}
         \\[1mm]
         && 
         - \, (\kappa_1 + \kappa_2) v
         + \kappa_3
         - z  \, v h
         .
    \end{array}
    \right.
\end{gather}
One can check that  
\begin{align}
    \mathbf{A} (\lambda, z)
    &= 
    \begin{pmatrix}
    0 & 0 \\[0.9mm] 0 & - z \, h
    \end{pmatrix}
    + 
    \begin{pmatrix}
    - u v + \kappa_1 & 1
    \\[0.9mm]
    - u v u v 
    + \kappa_1 u v + \kappa_3
    & u v
    \end{pmatrix}
    \lambda^{-1}
    + 
    \begin{pmatrix}
    u v - \kappa_2 & - u 
    \\[0.9mm]
    v u v - \kappa_2 v & - v u
    \end{pmatrix}
    (\lambda - 1)^{-1}
    ,
    \\[-4mm]
    \mathbf{B} (\lambda, z)
    &= 
    \begin{pmatrix}
    0 & 0 \\[0.9mm] 0 & - h
    \end{pmatrix}
    \lambda
    + z^{-1}
    \begin{pmatrix}
    u^2 v + u v u
    - 2 u v 
    - \kappa_1 u + \kappa_1
    & 
    - u + 1
    \\[0.9mm]
    - u v u v + v u v 
    + \kappa_1 u v 
    - \kappa_2 v + \kappa_3
    &
    0
    \end{pmatrix}
    + z^{-1} [u, v] \, \mathbf{I}
    + h \ \mathbf{I}
\end{align}
defines an isomonodromic Lax pair for this system in the case $\beta=-1$.

One more Hamiltonian system with a non-abelian parameter corresponds to a non-polynomial degeneration $\PIIIpr (D_7)$ of a non-abelian  system \Painleve $\PIIIpr$. The Hamiltonian of this system is given by the formula
\begin{align}
    \label{hamPIII'D7_nc}
    z H_3^{\prime {\rm H}}(D_7)
    &= \beta u^2 v^2 + (1 - \beta) u v u v
    + \kappa_2 u^2 v
    + \kappa_1 u v
    + \kappa_3u
    + z \, h u^{-1}.
\end{align}
The corresponding Hamiltonian system
\begin{gather}
    \label{eq:ncsysP3D7_nc}
    \left\{
    \begin{array}{lcl}
         z \, u'
         &=& 2 u v u
         + \beta [u, [u, v]]
         + \kappa_1 u
         + \kappa_2 u^2
         ,
         \\[2mm]
         z \, v'
         &=& - 2 v u v
         + \beta [v, [u, v]]
         -\kappa_1 v
         - \kappa_2 u v - \kappa_2 v u - \kappa_3
         + z \, u^{-1} h u^{-1}
         ,
    \end{array}
    \right.
\end{gather}
in the case $\beta = 0$ has a Lax pair of the form \eqref{eq:matABform_P3D7'_} with matrix coefficients
\begin{gather}
    \notag
    \begin{aligned}
    A_0
    &=
    \begin{pmatrix}
    0 & 0 \\[0.9mm]
    z \, h u^{-1} & 0
    \end{pmatrix},
    &
    A_{-1}
    &=-
    \begin{pmatrix}
    u v + \kappa_1
    &
    u
    \\[0.9mm]\kappa_2 u v + \kappa_3
    & - u v
    \end{pmatrix},
    &
    A_{-2}
    &=
    \begin{pmatrix}
    -\kappa_2
    &
    1
    \\[0.9mm]
    0
    &
    0
    \end{pmatrix},
    \end{aligned}
    \\[2mm]
    \notag
    \begin{aligned}
    B_1
    &=
    \begin{pmatrix}
    0 & 0
    \\[0.9mm]
    h u^{-1} & 0
    \end{pmatrix}
    ,
    &&&
    B_0
    &= z^{-1}
    \begin{pmatrix}
    u v
    + \kappa_2 u
    &
    -u
    \\[0.9mm]
    0
    &
    u v
    \end{pmatrix}
    .
    \end{aligned}
\end{gather}

 The considered systems  with the non-abelian coefficient $h$  and their Lax pairs are related to each other by a limiting transitions given in Appendix \ref{degdata_h}.
 
\subsubsection*{Acknowledgements}

The authors are grateful to M.~A.~Bershtein and B.~I.~Suleimanov for useful discussions. They are thankful to IHES for hospitality. The research of the second author was carried out under the State Assignment 0029-2021-0004 (Quantum field theory) of the Ministry of Science and Higher Education of the Russian Federation. The first author was partially supported by the International Laboratory of Cluster Geometry HSE, RF Government grant № 075-15-2021-608, and by Young Russian Mathematics award.

\appendix
\section*{Appendices}

\section{Non-abelian Hamiltonian systems of \PPainleve-type with constant coefficients}
\label{sec:syshamlist}

\qquad In this section explicit formulas are given for the degenerations of non-abelian Hamiltonian systems described by the scheme
\begin{figure}[H]
     \centering
     \scalebox{1.1}{\tikzset{every picture/.style={line width=0.75pt}} 

\begin{tikzpicture}[x=0.75pt,y=0.75pt,yscale=-1,xscale=1]

\draw    (32.65,36.5) -- (59.88,36.5) ;
\draw [shift={(61.88,36.5)}, rotate = 180] [color={rgb, 255:red, 0; green, 0; blue, 0 }  ][line width=0.75]    (10.93,-3.29) .. controls (6.95,-1.4) and (3.31,-0.3) .. (0,0) .. controls (3.31,0.3) and (6.95,1.4) .. (10.93,3.29)   ;
\draw    (99.3,24.46) -- (128.8,13.68) ;
\draw [shift={(130.67,12.99)}, rotate = 159.93] [color={rgb, 255:red, 0; green, 0; blue, 0 }  ][line width=0.75]    (10.93,-3.29) .. controls (6.95,-1.4) and (3.31,-0.3) .. (0,0) .. controls (3.31,0.3) and (6.95,1.4) .. (10.93,3.29)   ;
\draw    (164.3,64.46) -- (193.8,51.68) ;
\draw [shift={(195.67,50.5)}, rotate = 152.93] [color={rgb, 255:red, 0; green, 0; blue, 0 }  ][line width=0.75]    (10.93,-3.29) .. controls (6.95,-1.4) and (3.31,-0.3) .. (0,0) .. controls (3.31,0.3) and (6.95,1.4) .. (10.93,3.29)   ;
\draw    (99.3,49.46) -- (128.72,63.72) ;
\draw [shift={(130.52,64.59)}, rotate = 205.86] [color={rgb, 255:red, 0; green, 0; blue, 0 }  ][line width=0.75]    (10.93,-3.29) .. controls (6.95,-1.4) and (3.31,-0.3) .. (0,0) .. controls (3.31,0.3) and (6.95,1.4) .. (10.93,3.29)   ;
\draw    (165.3,11.46) -- (194.72,25.72) ;
\draw [shift={(196.52,26.59)}, rotate = 205.86] [color={rgb, 255:red, 0; green, 0; blue, 0 }  ][line width=0.75]    (10.93,-3.29) .. controls (6.95,-1.4) and (3.31,-0.3) .. (0,0) .. controls (3.31,0.3) and (6.95,1.4) .. (10.93,3.29)   ;
\draw    (229.65,36.5) -- (256.88,36.5) ;
\draw [shift={(258.88,36.5)}, rotate = 180] [color={rgb, 255:red, 0; green, 0; blue, 0 }  ][line width=0.75]    (10.93,-3.29) .. controls (6.95,-1.4) and (3.31,-0.3) .. (0,0) .. controls (3.31,0.3) and (6.95,1.4) .. (10.93,3.29)   ;

\draw (7.02,26.8) node [anchor=north west][inner sep=0.75pt]    
{\ref{eq:ncsysP6}};
\draw (73.02,26.8) node [anchor=north west][inner sep=0.75pt]    
{\ref{eq:ncsysP5}};
\draw (138.8,1.04) node [anchor=north west][inner sep=0.75pt]   
{\ref{eq:ncsysP3'}};
\draw (138.8,54.4) node [anchor=north west][inner sep=0.75pt]    
{\ref{eq:ncsysP4}};
\draw (202.77,26.8) node [anchor=north west][inner sep=0.75pt]    
{\ref{eq:ncsysP2}};
\draw (270.02,26.8) node [anchor=north west][inner sep=0.75pt]    
{\ref{eq:ncsysP1}};

\end{tikzpicture}}
\end{figure}

As in the scalar case, these degenerations are given by formulas of the form
\begin{gather}
    \begin{aligned}
    z
    &= Z (\varepsilon, \tilde z),
    &&&&&
    u
    &= U (\varepsilon, \tilde u, \tilde v),
    &&&&&
    v
    &= V (\varepsilon, \tilde u, \tilde v),
    \end{aligned}
    \\[2mm]
    \begin{aligned}
        \kappa_i
        &= K_i(
        \varepsilon,
        \tilde \kappa_1, \tilde \kappa_2, \tilde \kappa_3, \tilde \kappa_4
        ),
        &&&&
        i
        &= 1, 2, 3, 4,
    \end{aligned}
\end{gather}
where $Z$, $U$, $V$, and $K_i$ are linear functions in all their arguments except $\varepsilon$.
After passing to the limit $\varepsilon \to 0$, we obtain a Painlev\'e system with a smaller number with respect to the variables $\tilde z$, $\tilde u$, $\tilde v$ with parameters $\tilde \kappa_j$. Hoping that this will not lead to misunderstandings, we omit the sign $\tilde{\phantom{u}}$ everywhere in the   system obtained as a result of the limiting transition.

To degenerate Lax pairs, the transformation 
\begin{align}
    \lambda
    &= \Lambda (\varepsilon, \tilde \lambda)
\end{align}
  is applied, where $\Lambda$ is a linear in $\tilde \lambda$. Furthermore, a gauge transformation
\begin{align}
    &&
    {\mathbf{A}} (\lambda, z)
    &= g^{-1} \, \tilde {\mathbf{A}} (\lambda, z) \, g
    - g^{-1} \, g_{\lambda}^{\prime} ,
    &
    {\mathbf{B}} (\lambda, z)
    &= g^{-1} \, \tilde {\mathbf{B}} (\lambda, z) \, g
    - g^{-1} \, g_{z}^{\prime}
    &&
\end{align}
with the help of some nondegenerate matrix $g = g(\varepsilon, \lambda, z)$ is used. When we transform the matrices $\mathbf{A} (\lambda, z)$, $\mathbf{B} (\lambda, z)$, the variables $(z, \lambda, u, v)$ and the parameters $\kappa_i $, we mean that the transformation of the variables follows the gauge transformation. In the resulting Lax pair, we will also omit the  sign $\tilde{\phantom{u}}$.

\subsection{\PPainleve-6 systems}
\label{sec:ncP6}
\hspace{3.8mm}
Hamiltonian systems corresponding to the polynomial $H_6$ \eqref{hamPVI} with $\alpha=0$ have the form
\begin{gather}
    \label{eq:ncsysP6}
    \tag*{$\PVIn{\rm{H}}$}
    \left\{
    \begin{array}{lcr}
         z (z - 1) \, u'
         &=& u^2 v u 
         + u v u^2 
         - 2 u v u
         + \beta [u, [u, v]]
         - \kappa_1 u^2 
         + \kappa_2 u 
         \hspace{2.4cm}
         \\[1mm]
         && 
         + \, z \brackets{
         - 2 u v u 
         + \gamma [u, [u, v]]
         + u v + v u
         + \kappa_4 u 
         + (\kappa_1 - \kappa_2 - \kappa_4)
         },
         \\[2mm]
         z (z - 1) \, v'
         &=& - u v u v - v u^2 v - v u v u
         + 2 v u v
         + \beta [v, [u, v]]
         + \kappa_1 (u v +  v u)
         - \kappa_2 v
         \\[1mm]
         && 
         + \, \kappa_3
         + z \brackets{
         2 v u v + \gamma [v, [u, v]]
         - v^2
         - \kappa_4 v
         }.
    \end{array}
    \right.
\end{gather}
This family of systems is invariant under the group $S_3$ generated by the transformations
\begin{gather}
    \label{eq:scalP6sym1}
    \begin{aligned}
    r_1
    \brackets{
    z, u, v
    }
    &= \brackets{
    1 - z, \, 
    1 - u, \,
    - v
    },
    &&&&&
    r_2
    \brackets{
    z, u, v
    }
    &= \brackets{
    z^{-1}, \,
    z^{-1} u, \,
    z v
    }.
    \end{aligned}
\end{gather}
The involutions $r_1$ and $r_2$ change the parameters as follows:
\begin{align}
    \label{eq:scalP6sym1k}
    &\begin{aligned}
    r_1
    \brackets{
    \kappa_1, 
    \kappa_2, 
    \kappa_3, 
    \kappa_4,
    \beta,
    \gamma
    }
    &= \brackets{
    \kappa_1, \,
    2 \kappa_1 - \kappa_2 - \kappa_4, \,
    \kappa_3, \,
    \kappa_4, \,
    - 1 - \beta - \gamma, \,
    \gamma
    },
    \end{aligned}
    \\[2mm]
    \label{eq:scalP6sym2k}
    &\begin{aligned}
    r_2
    \brackets{
    \kappa_1, 
    \kappa_2, 
    \kappa_3, 
    \kappa_4,
    \beta,
    \gamma
    }
    &= \brackets{
    \kappa_1, \,
    \kappa_4 - 1, \,
    \kappa_3, \,
    \kappa_2 + 1,
    \gamma, \,
    \beta
    }.
    \end{aligned}
    &&&&&&&&
\end{align}

System \ref{eq:ncsysP6} is equivalent to the zero-curvature condition \eqref{eq:zerocurvcond} with
\begin{align} \label{eq:matABform}
    \mathbf{A} (z, \lambda)
    &= \dfrac{A_0}{\lambda}
    + \dfrac{A_1}{\lambda - 1}
    + \dfrac{A_2}{\lambda - z},
    &
    \mathbf{B} (z, \lambda)
    &= - \dfrac{A_2}{\lambda - z}
    + B_{1},
\end{align}
where
\begin{gather} 
    \notag
    \begin{aligned}
    A_0
    &= 
    \begin{pmatrix}
    - 1
    - \kappa_1 
    + \kappa_4
    & 
    u z^{-1} - 1
    \\[0.9mm]
    0 & 0
    \end{pmatrix},
    &&&
    A_1
    &= 
    \begin{pmatrix}
    - u v + \kappa_1
    & 
    1 
    \\[0.9mm]
    - u v u v + \kappa_1 u v + \kappa_3
    & 
    u v
    \end{pmatrix},
    \end{aligned}
    \\[3mm]
    \label{eq:Laxpair_P6case1}
    A_2
    = 
    \begin{pmatrix}
    u v 
    + (\kappa_1 - \kappa_2 - \kappa_4) 
    & 
    - u z^{-1}
    \\[0.9mm]
    z v u v
    + (\kappa_1 - \kappa_2 - \kappa_4)  z v 
    & 
    - v u
    \end{pmatrix},
    \\[3mm]
    \notag
    \begin{aligned}
    B_{1}
    = 
    \begin{pmatrix}
    (z (z - 1))^{-1} \brackets{
    u^2 v + u v u
    - \kappa_1 u 
    - z \brackets{
    u v + v u + (\kappa_1 - \kappa_2 - \kappa_4)
    }
    }
    & 0
    \\[0.9mm]
    - v u v 
    - (\kappa_1 - \kappa_2 - \kappa_4)  v
    & 0
    \end{pmatrix}
    \\[2mm]
    - \, (z (z - 1))^{-1} \brackets{
    \beta + (1 + \gamma) z
    } [u, v] \, \mathbf{I}
    .
    \end{aligned}
\end{gather}

\subsection{\PPainleve-5 systems}
\hspace{3.8mm}
The Hamiltonian system generated by the polynomial \eqref{hamPV} with $\alpha = 0$ is written as
\begin{gather}
    \label{eq:ncsysP5}
    \tag*{$\PVn{\rm{H}}$}
    \left\{
    \begin{array}{lcr}
         z \, u'
         &=& u^2 v u + u v u^2
         - 4 u v u
         + \beta [u, [u, v]]
         - \kappa_1 u^2 + u v + v u 
         + (\kappa_1 + \kappa_2) u
         \\[1mm]
         && 
         - \, \kappa_2
         + \, \kappa_4 z u
         ,
         \\[2mm]
         z \, v'
         &=& - u v u v - v u^2 v - v u v u
         + 4 v u v 
         + \beta [v, [u, v]]
         + \kappa_1 u v + \kappa_1 v u - v^2
         \hspace{0.4cm}
         \\[1mm]
         && 
         - \, (\kappa_1 + \kappa_2) v
         + \kappa_3
         - \, \kappa_4 z v
         .
    \end{array}
    \right.
\end{gather}
To get system \ref{eq:ncsysP5} from system \ref{eq:ncsysP6}, one can
make the following substitution of variables and parameters:
\begin{align} \label{eq:P6toP5map}
    z 
    &= 1 + \varepsilon \, \tilde z,
    &
    \kappa_2 
    &= - \varepsilon^{-1} \tilde \kappa_4 
    + \tilde \kappa_2,
    &
    \kappa_4
    &= \varepsilon^{-1} \tilde \kappa_4 
    + \tilde \kappa_1,
    &
    \beta 
    &= \tilde \beta - \gamma,
\end{align}
and pass to the limit $\varepsilon \to 0$. In order to construct a Lax pair for the system \ref{eq:ncsysP5}, we supplement \eqref{eq:P6toP5map} with the formula
\begin{align} \label{eq:P6toP5map_la}
    \lambda 
    &= 1 + \varepsilon \, \tilde z \, \tilde \lambda.
\end{align}
As a result of  the limiting transition, the Lax pair \eqref{eq:matABform} passes to the pair
\begin{align} \label{eq:matABform_P5}
    \mathbf{A} (\lambda, z)
    &= A_0
    + \frac{A_1}{\lambda}
    + \frac{A_2}{\lambda - 1},
    &
    \mathbf{B} (\lambda, z)
    &= B_1 \lambda
    + B_{0},
\end{align}
where matrices $A_0$, $A_1$, $A_2$, $B_1$, $B_0$ are given by
\begin{gather} 
    \notag
    \begin{aligned}
    A_{0}
    &= 
    \begin{pmatrix}
    \kappa_4 z & 0 \\[0.9mm] 0 & 0
    \end{pmatrix},
    &
    A_1
    &= 
    \begin{pmatrix}
    - u v + \kappa_1 & 1
    \\[0.9mm]
    - u v u v 
    + \kappa_1 u v + \kappa_3
    & u v
    \end{pmatrix},
    &
    A_2
    &= 
    \begin{pmatrix}
    u v - \kappa_2 & - u 
    \\[0.9mm]
    v u v - \kappa_2 v & - v u
    \end{pmatrix}
    ,
    \end{aligned}
    \\[-1mm]
    \label{eq:Laxpair_P5case1_bet}
    \\[-1mm]
    \notag
    \begin{aligned}
    B_1
    &= 
    \begin{pmatrix}
    \kappa_4 & 0 \\[0.9mm] 0 & 0
    \end{pmatrix},
    &
    B_{0}
    &= z^{-1}
    \begin{pmatrix}
    u^2 v + u v u
    - 2 u v 
    - \kappa_1 u + \kappa_1
    & 
    - u + 1
    \\[0.9mm]
    - u v u v + v u v 
    + \kappa_1 u v 
    - \kappa_2 v + \kappa_3
    &
    0
    \end{pmatrix}
    - z^{-1} \beta [u, v] \, \mathbf{I}
    .
    \end{aligned}
\end{gather}

\subsection{The \PPainleve-4 system}
\hspace{3.8mm}
The Hamiltonian $H_4^{\rm{H}}$ with \eqref{hamPIV} leads to the system
\begin{gather}
    \tag*{$\PIVn{\rm{H}}$}
    \label{eq:ncsysP4}
    \left\{
    \begin{array}{lcl}
         u' 
         &=& 
         - u^2 
         + u v + v u
         - 2 z u
         + \kappa_2,
         \\[2mm]
         v'
         &=& 
         - v^2
         + v u + u v
         + 2 z v
         + \kappa_3.
    \end{array}
    \right.
\end{gather}
Making the following replacement with the small parameter $\varepsilon$:
\begin{gather}
    \label{eq:P5toP4map}
    \begin{aligned}
    z 
    &= 1 + \sqrt{2} \varepsilon \, \tilde z
    ,
    &&&
    u 
    &= \tfrac{1}{\sqrt{2}} \varepsilon \, \tilde u,
    &&&
    v
    &= \tfrac{1}{\sqrt{2}} \varepsilon^{-1} \, \tilde v,
    \end{aligned}
    \\[2mm]
    \label{eq:P5toP4map_k}
    \begin{aligned}
    \kappa_1
    &= \varepsilon^{-2},
    &&&
    \kappa_2 
    &= - \tfrac12 \tilde \kappa_2,
    &&&
    \kappa_3
    &= \tfrac12 \varepsilon^{-2} \tilde \kappa_3,
    &&&
    \kappa_4
    &= - \varepsilon^{-2},
    \end{aligned}
\end{gather}
and passing to the limit,  we transform  system \ref{eq:ncsysP5} to  system \ref{eq:ncsysP4}. To extend this limiting transition to Lax pairs, consider the gauge transformation
\begin{align}
    \label{eq:P5toP4map_la}
    &&
    \lambda 
    &= 1 + \sqrt{2} \varepsilon \, \tilde \lambda
    ,
    &
    g
    &= 
    \begin{pmatrix}
    1 & 0 \\ 0 & \sqrt{2} \, \varepsilon
    \end{pmatrix}.
    &&
\end{align}
One can check that it degenerates the pair \eqref{eq:matABform_P5} into the pair
\begin{align} \label{eq:matABform_P4}
    \mathbf{A} (\lambda, z)
    &= A_1 \lambda + A_{0} + A_{-1} \lambda^{-1},
    &
    \mathbf{B} (\lambda, z)
    &= B_1 \lambda + B_{0},
\end{align}
with
\begin{equation} \label{eq:Laxpair_P40}
    \begin{gathered}
        \begin{aligned}
        A_1
        &= 
        \begin{pmatrix}
        - 2 & 0 \\[0.9mm] 0 & 0
        \end{pmatrix},
        &&&
        A_0
        &= 
        \begin{pmatrix}
        - 2 z & 1 \\[0.9mm] 
        u v + \kappa_3 & 0
        \end{pmatrix},
        &&&
        A_{-1}    
        &= \tfrac12
        \begin{pmatrix}
        u v + \kappa_2 & - u \\[0.9mm]
        v u v + \kappa_2 v & - v u
        \end{pmatrix},
        \end{aligned}
        \\[2mm]
        \begin{aligned}
        B_1
        &= 
        \begin{pmatrix}
        - 2 & 0 \\[0.9mm] 0 & 0
        \end{pmatrix},
        &&&
        B_0
        &= 
        \begin{pmatrix}
        - u - 2 z & 1 \\[0.9mm]
        u v + \kappa_3 & 0
        \end{pmatrix}.
        \end{aligned}
    \end{gathered}
\end{equation}

\subsection{\PPainleve-\texorpdfstring{$\mathbf{3^\prime}$}{3'} systems}
\hspace{3.8mm}
The Hamiltonian system corresponding to \eqref{hamPIII'} has the form
\begin{gather}
    \tag*{$\PIIIprn{\rm{H}}$}
    \label{eq:ncsysP3'}
    \left\{
    \begin{array}{lcl}
         z \, u'
         &=& 2 u v u 
         + \beta [u, [u, v]]
         + \kappa_1 u
         + \kappa_2 u^2 + \kappa_4 z,
         \\[2mm]
         z \, v'
         &=& - 2 v u v 
         + \beta [v, [u, v]]
         - \kappa_1 v
         - \kappa_2 u v - \kappa_2 v u - \kappa_3
         .
    \end{array}
    \right.
\end{gather}
The following limiting transition:
\begin{gather}
\label{eq:P5toP3D6'map}
    \begin{aligned}
    u 
    &= 1 + \varepsilon \, \tilde u,
    &&&
    v
    &= \varepsilon^{-1} \, \tilde v,
    \end{aligned}
    \\[2mm]
    \label{eq:P5toP3D6'map_k}
    \begin{aligned}
    \kappa_1
    &= - \tilde \kappa_1 
    - \varepsilon^{-1} \tilde \kappa_2,
    &
    \kappa_2 
    &= - \varepsilon^{-1} \tilde \kappa_2,
    &
    \kappa_3
    &= - \varepsilon^{-1} \tilde \kappa_3,
    &
    \kappa_4
    &= \varepsilon \, \tilde \kappa_4,
    &
    \beta
    &= \tilde \beta - 1,
    \end{aligned}
\end{gather}
implements the degeneration of $\text{\ref{eq:ncsysP5}} \to \text{\ref{eq:ncsysP3'}}$. To degenerate the Lax pair, one can use the gauge transformation
\begin{align}
    \label{eq:P5toP3D6'map_la}
    &&
    \lambda 
    &= \varepsilon^{-1} \tilde \lambda,
    &
    g
    &= 
    \begin{pmatrix}
    1 & 0 \\[0.9mm] 0 & \varepsilon
    \end{pmatrix},
    &&
\end{align}
reducing pair \eqref{eq:matABform_P5} to
\begin{align} \label{eq:matABform_P3D6'_}
    \mathbf{A} (\lambda, z)
    &= A_0 
    + A_{-1} \lambda^{-1} 
    + A_{-2} \lambda^{-2},
    &
    \mathbf{B} (\lambda, z)
    &= B_1 \lambda + B_0,
\end{align}
where 
\begin{gather}
    \notag
    \begin{aligned}
    A_0
    &= 
    \begin{pmatrix}
    \kappa_4 z & 0 \\[0.9mm]
    0 & 0
    \end{pmatrix},
    &
    A_{-1}
    &= -
    \begin{pmatrix}
    \kappa_1
    & 
    u
    \\[0.9mm]
    u v^2
    + \kappa_1 v
    + \kappa_2 u v 
    + \kappa_3
    & - u v + v u
    \end{pmatrix},
    &
    A_{-2}
    &= 
    \begin{pmatrix}
    v + \kappa_2
    & 
    - 1
    \\[0.9mm]
    v^2 + \kappa_2 v
    &
    - v
    \end{pmatrix},
    \end{aligned}
    \\[-1mm]
    \label{eq:Laxpair_P3D6case1}
    \\[-1mm]
    \notag
    \begin{aligned}
    B_1
    &= 
    \begin{pmatrix}
    \kappa_4 & 0 
    \\[0.9mm]
    0 & 0
    \end{pmatrix}
    ,
    &&&
    B_0
    &= z^{-1}
    \begin{pmatrix}
    u v + v u
    + \kappa_2 u 
    &
    - u
    \\[0.9mm]
    - \brackets{
    u v^2 
    + \kappa_1 v
    + \kappa_2 u v + \kappa_3
    }
    & 
    0
    \end{pmatrix}
    + z^{-1} (1 - \beta) [u, v] \, \mathbf{I}.
    \end{aligned}
\end{gather}

\subsection{The \PPainleve-2 system}
\hspace{3.8mm}
The Hamiltonian system corresponding to \eqref{hamPII} reads
\begin{align}
    \tag*{$\PIIn{\rm{H}}$}
    \label{eq:ncsysP2}
    \left\{
    \begin{array}{lcl}
         u'
         &=& - u^2 + v - \tfrac12 z,
         \\[2mm]
         v'
         &=& v u + u v + \kappa_3.
    \end{array}
    \right.
\end{align}
The degeneration $\text{\ref{eq:ncsysP4}} \to \text{\ref{eq:ncsysP2}}$ is given by the following replacements with the small parameter~$\varepsilon$:
\begin{gather}
\label{eq:P4toP2map}
    \begin{aligned}
    z 
    &= \tfrac14 \varepsilon^{-3} 
    - \varepsilon \, \tilde z,
    &&&
    u 
    &= - \tfrac14 \varepsilon^{-3} 
    - \varepsilon^{-1} \tilde u
    ,
    &&&
    v
    &= - 2 \varepsilon \tilde v,
    &&&
    \kappa_2
    &= - \tfrac{1}{16} \varepsilon^{-6},
    &&&
    \kappa_3
    &= 2 \tilde \kappa_3;
    \end{aligned}
    \\
    \label{eq:P4toP2map_la}
    \begin{aligned}
    &&&&
    \lambda
    &= -\tfrac18 \varepsilon^{-3} 
    + \tfrac12 \varepsilon^{-1} \tilde \lambda,
    &&&
    g
    &= 
    \begin{pmatrix}
    1 & 0 \\[0.9mm]
    - \varepsilon \, v & \varepsilon
    \end{pmatrix}.
    &&&&
    \end{aligned}
\end{gather}
The Lax pair \eqref{eq:matABform_P4} turns into
\begin{align} \label{eq:matABform_P2}
    \mathbf{A} (\lambda, z)
    &= A_2 \lambda^2 + A_1 \lambda + A_0,
    &
    \mathbf{B} (\lambda, z)
    &= B_1 \lambda + B_0,
\end{align}
with
\begin{equation} \label{eq:Laxpair_P20}
    \begin{gathered}
        \begin{aligned}
        A_2
        &= 
        \begin{pmatrix}
        2 & 0 \\[0.9mm] 0 & 0
        \end{pmatrix},
        &&&
        A_1
        &= 
        \begin{pmatrix}
        0 & -2 \\[0.9mm]
        - v & 0
        \end{pmatrix},
        &&&
        A_0
        &= 
        \begin{pmatrix}
        - v + z & - 2 u
        \\[0.9mm]
        u v + \kappa_3 & v
        \end{pmatrix},
        \end{aligned}
        \\[2mm]
        \begin{aligned}
        B_1
        &= 
        \begin{pmatrix}
        1 & 0 \\[0.9mm] 0 & 0
        \end{pmatrix},
        &&&
        B_0
        &= 
        \begin{pmatrix}
        - u & -1 \\[0.9mm]
        - \tfrac12 v & 0
        \end{pmatrix}.
        \end{aligned}
    \end{gathered}
\end{equation}

The degeneration of system \ref{eq:ncsysP3'}   to system \ref{eq:ncsysP2}   is carried out by the following substitution of the variables and parameters:
\begin{gather}
\label{eq:P3D6'toP2map}
\begin{gathered} 
    \begin{aligned}
    z 
    &= - \varepsilon^{-3}
    - 2 \varepsilon^{-1} \tilde z
    ,
    &&&
    u 
    &= 1 + 2 \varepsilon \, \tilde u
    ,
    &&&
    v
    &= \tfrac12 \varepsilon^{-1} \tilde v,
    \end{aligned}
    \\[1mm]
    \begin{aligned}
    \kappa_1
    &= \tfrac12 \varepsilon^{-3},
    &&&
    \kappa_2
    &= - \tfrac14 \varepsilon^{-3},
    &&&
    \kappa_3
    &= - \tfrac14 \varepsilon^{-3} \tilde \kappa_3,
    &&&
    \kappa_4
    &= \tfrac14.
    \end{aligned}
\end{gathered}
\end{gather}
To degenerate the Lax pairs, we make the transformation
\begin{align}
\label{eq:P3D6'toP2map_la}
    &&
    \lambda
    &= - 1 - 2 \varepsilon \, \tilde \lambda,
    &
    g
    &= 
    \begin{pmatrix}
    1 & 0 \\ 0 & 2 \varepsilon^2
    \end{pmatrix},
    &&
\end{align}
as a result of which the pair \eqref{eq:matABform_P3D6'_} turns into the pair \eqref{eq:matABform_P2} with coefficients given by the formula~\eqref{eq:Laxpair_P20}.

\subsection{The \PPainleve-1 system}
\hspace{3.8mm}
The Hamiltonian system related to \eqref{hamPI} is
\begin{align}
    \label{eq:ncsysP1}
    \tag*{$\PIn{\rm{H}}$}
    \left\{
    \begin{array}{lcl}
         u'
         &=& v,
         \\[2mm]
         v'
         &=& 6 u^2 + z.
    \end{array}
    \right.
\end{align}
This system can be obtained from the \ref{eq:ncsysP2} system by the limiting transition
\begin{gather}
\label{eq:P2toP1map}
\begin{gathered}
    \begin{aligned}
    z 
    &= - 6 \varepsilon^{-10}
    + \varepsilon^{2} \, \tilde z
    ,
    &&&
    u 
    &= \varepsilon^{-5} + \varepsilon \, \tilde u - \varepsilon^{4} \, \tilde v
    ,
    &&&
    v
    &= - 2 \varepsilon^{-10}
    + 2 \varepsilon^{-4} \, \tilde u
    - \varepsilon^{-1} \, \tilde v
    ,
    \end{aligned}
    \\[1mm]
    \begin{aligned}
    \kappa_3
    &= 4 \varepsilon^{-15}.
    \end{aligned}
\end{gathered}
\end{gather}
Supplementing the replacement \eqref{eq:P2toP1map} by the following gauge transformation:
\begin{align}
\label{eq:P2toP1map_la}
    &&
    \lambda
    &= - \varepsilon^{-5} + \varepsilon \, \tilde \lambda,
    &
    g
    &= e^{
    \varepsilon^{-5} \lambda^2
    + 4 \varepsilon^{-10} \lambda
    + \varepsilon^{-5} z
    }
    \begin{pmatrix}
    1 & 0 \\[0.9mm]
    \varepsilon^2 \, u & \varepsilon^2
    \end{pmatrix},
    &&
\end{align}
one can obtain from the Lax pair \eqref{eq:matABform_P2} a pair of the form
\begin{align} \label{eq:matABform_P1}
    \mathbf{A} (\lambda, z)
    &= A_2 \lambda^2 + A_1 \lambda + A_0,
    &
    \mathbf{B} (\lambda, z)
    &= B_1 \lambda + B_0,
\end{align}
where matrices are given by
\begin{equation} \label{eq:Laxpair_P10}
    \begin{gathered}
        \begin{aligned}
        A_2
        &= 
        \begin{pmatrix}
        0 & 0 \\[0.9mm] 
        2 & 0
        \end{pmatrix},
        &&&
        A_1
        &= 
        \begin{pmatrix}
        0 & -2 \\[0.9mm]
        - 2 u & 0
        \end{pmatrix},
        &&&
        A_0
        &= 
        \begin{pmatrix}
        - v & - 2 u
        \\[0.9mm]
        2 u^2 + z & v
        \end{pmatrix},
        \end{aligned}
        \\[2mm]
        \begin{aligned}
        B_1
        &= 
        \begin{pmatrix}
        0 & 0 \\[0.9mm] 1 & 0
        \end{pmatrix},
        &&&
        B_0
        &= 
        \begin{pmatrix}
        0 & -1 \\[0.9mm]
        - 2 u & 0
        \end{pmatrix}.
        \end{aligned}
    \end{gathered}
\end{equation}

\subsection{Special cases of \texorpdfstring{$\PIIIpr$}{P3'} type systems}\label{P3'D7} 
\hspace{3.8mm}

For completeness, let us consider  a special case $\PIIIpr (D_7)$ of non-abelian $\PIIIpr$ system with a non-polynomial Hamiltonian $H_3^{\prime\rm{H}}(D_7) = \tfrac1z \trace \brackets{H_3^{\prime}(D_7)}$, where 
\begin{align}
    \label{hamPIII'D7}
    H_3^\prime(D_7)
    &= \beta u^2 v^2 + (1 - \beta) u v u v
    + \kappa_2 u^2 v
    + \kappa_1 u v 
    + \kappa_3 u
    + \kappa_4 z u^{-1}.
\end{align}
This Hamiltonian leads to the system
\begin{gather}
    \label{eq:ncsysP3D7}
    \tag*{$\PIIIprn{\rm{H}}(D_7)$}
    \left\{
    \begin{array}{lcl}
         z \, u'
         &=& 2 u v u
         + \beta [u, [u, v]]
         + \kappa_1 u
         + \kappa_2 u^2
         ,
         \\[2mm]
         z \, v'
         &=& - 2 v u v 
         + \beta [v, [u, v]]
         - \kappa_1 v
         - \kappa_2 u v - \kappa_2 v u - \kappa_3
         + \kappa_4 z u^{-2}
    \end{array}
    \right.
\end{gather}
with the Laurent right hand side.

It is easy to check that for any parameter~$\beta$ there exists  a polynomial
$S_3^{(2)}$ satisfying Assumption \ref{ass:ncham}.   
 
The formulas
\begin{align}
    \label{eq:mapP3toP3D72}
    v
    &= \tilde v + \varepsilon^{-1} u^{-1},
    &
    \kappa_1
    &= \tilde \kappa_1 - 2 \varepsilon^{-1},
    &
    \kappa_3
    &= \tilde \kappa_3 - \varepsilon^{-1} \tilde \kappa_2,
    &
    \kappa_4
    &= \varepsilon \tilde \kappa_4
\end{align}
describe the degeneration $\text{\ref{eq:ncsysP3'}} \to \text{\ref{eq:ncsysP3D7}}$.  
Gauge transformation
\begin{align}
    &&
    \lambda
    &= - \tilde \lambda,
    &
    g
    &= \lambda^{- \varepsilon^{-1}} \,
    z^{- \varepsilon^{-1}}
    \begin{pmatrix}
    1 & 0 \\ - v & 1
    \end{pmatrix}
    ,
    &&
\end{align}
turns the Lax pair \eqref{eq:matABform_P3D6'_} into
\begin{align} \label{eq:matABform_P3D7'_}
    \mathbf{A} (\lambda, z)
    &= A_0 
    + A_{-1} \lambda^{-1} 
    + A_{-2} \lambda^{-2},
    &
    \mathbf{B} (\lambda, z)
    &= B_1 \lambda + B_0,
\end{align}
\begin{gather}
    \notag
    \begin{aligned}
    A_0
    &= 
    \begin{pmatrix}
    0 & 0 \\[0.9mm]
    \kappa_4 z u^{-1} & 0
    \end{pmatrix},
    &
    A_{-1}
    &= -
    \begin{pmatrix}
    u v + \kappa_1
    & 
    u
    \\[0.9mm]\kappa_2 u v + \kappa_3
    & - u v
    \end{pmatrix},
    &
    A_{-2}
    &= 
    \begin{pmatrix}
    - \kappa_2
    & 
    1
    \\[0.9mm]
    0
    &
    0
    \end{pmatrix},
    \end{aligned}
    \\[-1mm]
    \label{eq:Laxpair_P3D7case1}
    \\[-1mm]
    \notag
    \begin{aligned}
    B_1
    &= 
    \begin{pmatrix}
    0 & 0 
    \\[0.9mm]
    \kappa_4 u^{-1} & 0
    \end{pmatrix}
    ,
    &&&
    B_0
    &= z^{-1}
    \begin{pmatrix}
    u v 
    + \kappa_2 u 
    &
    - u
    \\[0.9mm]
    0
    & 
    u v
    \end{pmatrix}
    - z^{-1} \beta [u, v] \, \mathbf{I}.
    \end{aligned}
\end{gather}

For a degeneration $\text{\ref{eq:ncsysP3D7}} \to \text{\ref{eq:ncsysP1}}$, one can use the substitution
\begin{gather}
\label{eq:P3D72'toP1map}
\begin{gathered}
    \begin{aligned}
    z 
    &= 2 \varepsilon^{-10}
    + \varepsilon^{-6} \tilde z
    ,
    &&&
    u 
    &= 1 + \varepsilon^2 \, \tilde u
    ,
    &&&
    v
    &= 2 \varepsilon^{-5}
    + \varepsilon^{-2} \, \tilde v
    ,
    \end{aligned}
    \\[1mm]
    \begin{aligned}
    \kappa_1
    &= \varepsilon,
    &&&
    \kappa_2
    &= - 4 \varepsilon^{-5},
    &&&
    \kappa_3
    &= 12 \varepsilon^{-10},
    &&&
    \kappa_4
    &= 2.
    \end{aligned}
\end{gathered}
\end{gather}
Gauge transformation
\begin{align}
     &&
     \lambda
     &= 1 + \varepsilon^2 \, \tilde \lambda
     ,
     &
     g
     &= e^{2 \varepsilon^{-5} \lambda^{-1}}
     \begin{pmatrix}
     2 & 0 \\ \varepsilon^{4} v & \varepsilon^4
     \end{pmatrix}
     &&
\end{align}
leads to the Lax pair \eqref{eq:matABform_P1} with coefficients \eqref{eq:Laxpair_P10}.

\section{Degenerations of systems with a non-abelian constant}
\label{degdata_h}

\hspace{3.8mm}
Due to the appearance of the non-abelian constant $h$, the limiting transitions from Appendix \ref{sec:syshamlist} have to be modified by adding formulas for transforming the constant $h$ and slightly changing the gauge transformation. Below, without any comments, we present modified formulas that define the degenerations.

\medskip
\textbullet \,\,
$\PV \to \PIV$:
\begin{gather}
    \begin{aligned}
    z
    &= 1 + \sqrt{2} \varepsilon \, \tilde z,
    &&&
    u
    &= \tfrac{1}{\sqrt{2}} \varepsilon \, \tilde u,
    &&&
    v
    &= \tfrac{1}{\sqrt{2}} \varepsilon^{-1} \, \tilde v,
    \end{aligned}
    \\[2mm]
    \begin{aligned}
    \kappa_1
    &= \varepsilon^{-2},
    &&&
    \kappa_2
    &= - \tfrac12 \tilde \kappa_2,
    &&&
    \kappa_3
    &= \tfrac12 \varepsilon^{-2} \, \tilde \kappa_3,
    &&&
    h
    &= - \varepsilon^{-2} + \tfrac{1}{\sqrt{2}} \varepsilon^{-1} \, \tilde h,
    \end{aligned}
    \\[2mm]
    \begin{aligned}
    \lambda
    &= 1 + \sqrt{2} \varepsilon \, \tilde \lambda,
    &&&
    g
    &= e^{- \varepsilon^{-2} \lambda}
    \begin{pmatrix}
    1 & 0 \\ 0 & \sqrt{2} \varepsilon
    \end{pmatrix}
    .
    \end{aligned}
\end{gather}

\textbullet \,\,
$\PV \to \PIIIpr$:
\begin{gather}
    \begin{aligned}
    u
    &= 1 + \varepsilon \, \tilde u,
    &&&
    v
    &= \varepsilon^{-1} \, \tilde v,
    \end{aligned}
    \\[2mm]
    \begin{aligned}
    \kappa_1
    &= - \tilde \kappa_1 - \varepsilon^{-1} \, \tilde \kappa_2,
    &&&
    \kappa_2
    &= - \varepsilon^{-1} \tilde \kappa_2,
    &&&
    \kappa_3
    &= - \varepsilon^{-1} \, \tilde \kappa_3,
    &&&
    h
    &= \varepsilon \, \tilde h,
    \end{aligned}
    \\[2mm]
    \begin{aligned}
    \lambda
    &= \varepsilon^{-1} \, \tilde \lambda,
    &&&
    g
    &= 
    \begin{pmatrix}
    1 & 0 \\ 0 & \varepsilon
    \end{pmatrix}
    .
    \end{aligned}
\end{gather}

\textbullet \,\,
$\PIV \to \PII$:
\begin{gather}
    \begin{aligned}
    z
    &= \tfrac14 \varepsilon^{-3} 
    - \varepsilon \, \tilde z,
    &&&
    u
    &= - \tfrac14 \varepsilon^{-3}
    - \varepsilon^{-1} \, \tilde u
    ,
    &&&
    v
    &= - 2 \varepsilon \, \tilde v,
    \end{aligned}
    \\[2mm]
    \begin{aligned}
    \kappa_2
    &= - \tfrac{1}{16} \varepsilon^{-6},
    &&&
    \kappa_3
    &= 2 \tilde \kappa_3,
    &&&
    h
    &= 4 \varepsilon \, \tilde h,
    \end{aligned}
    \\[2mm]
    \begin{aligned}
    \lambda
    &= -\tfrac18 \varepsilon^{-3}
    + \tfrac12 \varepsilon^{-1} \, \tilde \lambda,
    &&&
    g
    &= e^{
    - \frac12 \varepsilon^{-3} \lambda
    + \frac14 \varepsilon^{-3} z
    }
    \begin{pmatrix}
    1 & 0 \\ - \varepsilon \, v & \varepsilon
    \end{pmatrix}
    .
    \end{aligned}
\end{gather}

\textbullet \,\,
$\PIIIpr \to \PII$:
\begin{gather}
    \begin{aligned}
    z
    &= - \varepsilon^{-3} 
    - 2 \varepsilon^{-1} \, \tilde z,
    &&&
    u
    &= 1 + 2 \varepsilon \, \tilde u,
    &&&
    v
    &= \tfrac{1}{2} \varepsilon^{-1} \, \tilde v,
    \end{aligned}
    \\[2mm]
    \begin{aligned}
    \kappa_1
    &= \tfrac12 \varepsilon^{-3},
    &&&
    \kappa_2
    &= - \tfrac14 \varepsilon^{-3},
    &&&
    \kappa_3
    &= - \tfrac14 \varepsilon^{-3} \, \tilde \kappa_3,
    &&&
    h
    &= \tfrac14 
    + \varepsilon^{2} \, \tilde h,
    \end{aligned}
    \\[2mm]
    \begin{aligned}
    \lambda
    &= - 1 - 2 \varepsilon \, \tilde \lambda,
    &&&
    g
    &= e^{
    - \frac14 \varepsilon^{-3} \lambda
    - \frac14 z
    }
    \begin{pmatrix}
    1 & 0 \\ 0 & 2 \varepsilon^2
    \end{pmatrix}
    .
    \end{aligned}
\end{gather}

\textbullet \,\,
$\PIIIpr \to \PIIIpr(D_7)$:
\begin{gather}
    \begin{aligned}
    v
    &= \tilde v 
    + \varepsilon^{-1} \, u^{-1},
    \end{aligned}
    \\[2mm]
    \begin{aligned}
    \kappa_1
    &= - 2 \varepsilon^{-1} 
    + \tilde \kappa_1,
    &&&
    \kappa_3
    &= - \varepsilon^{-1} \, \tilde \kappa_2 
    + \tilde \kappa_3,
    &&&
    h
    &= \varepsilon \, \tilde h,
    \end{aligned}
    \\[2mm]
    \begin{aligned}
    \lambda
    &= - \tilde \lambda,
    &&&
    g
    &= \lambda^{- \varepsilon^{-1}} \,
    z^{-\varepsilon^{-1}}
    \begin{pmatrix}
    1 & 0 \\ - v & 1
    \end{pmatrix}
    .
    \end{aligned}
\end{gather}

\textbullet \,\,
$\PIIIpr(D_7) \to \PI$:
\begin{gather}
    \begin{aligned}
    z
    &= 2 \varepsilon^{-10} 
    + \varepsilon^{-6} \, \tilde z,
    &&&
    u
    &= 1 
    + \varepsilon^{2} \, \tilde u,
    &&&
    v
    &= 2 \varepsilon^{-5}
    + \varepsilon^{-2} \, \tilde v,
    \end{aligned}
    \\[2mm]
    \begin{aligned}
    \kappa_1
    &= \varepsilon,
    &&&
    \kappa_2
    &= - 4 \varepsilon^{-5},
    &&&
    \kappa_3
    &= 12 \varepsilon^{-10},
    &&&
    h
    &= 2
    + \varepsilon^{4} \, \tilde h,
    \end{aligned}
    \\[2mm]
    \begin{aligned}
    \lambda
    &= 1 + \varepsilon^2 \, \tilde \lambda,
    &&&
    g
    &= e^{
    2 \varepsilon^{-5} \lambda^{-1}
    }
    \begin{pmatrix}
    2 & 0 \\ \varepsilon^4 \, v & \varepsilon^4
    \end{pmatrix}
    .
    \end{aligned}
\end{gather}

\textbullet \,\,
$\PII \to \PI$:
\begin{gather}
    \begin{aligned}
    z
    &= - 6 \varepsilon^{-10}
    + \varepsilon^2 \, \tilde z,
    &&&
    u
    &= \varepsilon^{-5}
    + \varepsilon \, \tilde u
    - \varepsilon^4 \, \tilde v
    ,
    &&&
    v
    &= - 2 \varepsilon^{-10}
    + 2 \varepsilon^{-4} \, \tilde u
    - \varepsilon^{-1} \, \tilde v
    ,
    \end{aligned}
    \\[2mm]
    \begin{aligned}
    \kappa_3
    &= 4 \varepsilon^{-15},
    &&&
    h
    &= \tfrac12 \varepsilon^{2} \, \tilde h,
    \end{aligned}
    \\[2mm]
    \begin{aligned}
    \lambda
    &= - \varepsilon^{-5} 
    + \varepsilon \, \tilde \lambda,
    &&&
    g
    &= e^{
    \varepsilon^{-5} \lambda^2
    - 2 \varepsilon^{-10} \lambda
    }
    \begin{pmatrix}
    1 & 0 \\ \varepsilon^2 \, u & \varepsilon^2
    \end{pmatrix}
    .
    \end{aligned}
\end{gather}

\bibliographystyle{plain}
\bibliography{bib}

\begin{thebibliography}{1}

\bibitem{Adler_Sokolov_2020}
V.~E. Adler and V.~V. Sokolov.
\newblock Non-{A}belian evolution systems with conservation laws.
\newblock {\em Mathematical Physics, Analysis and Geometry},
  24(1):\href{https://link.springer.com/article/10.1007/s11040--021--09382--6}{1--24},
  2021.
\newblock \href{https://arxiv.org/abs/2008.09174}{arXiv:2008.09174}.

\bibitem{Balandin_Sokolov_1998}
S.~P. Balandin and V.~V. Sokolov.
\newblock On the {P}ainlev{\'e} test for non-{A}belian equations.
\newblock {\em Physics letters A},
  246(3-4):\href{https://www.sciencedirect.com/science/article/abs/pii/S0375960198003363?via\%3Dihub}{267--272},
  1998.

\bibitem{Bobrova_Sokolov_2020_1}
I.~A. Bobrova and V.~V. Sokolov.
\newblock On matrix {P}ainlev{\'e}-4 equations. {P}art 1:
  {P}ainlev{\'e}-{K}ovalevskaya test.
\newblock {\em arXiv preprint
  \href{https://arxiv.org/abs/2107.11680}{arXiv:2107.11680}}, 2021.

\bibitem{Kawakami_2015}
H.~Kawakami.
\newblock Matrix {P}ainlev{\'e} systems.
\newblock {\em Journal of Mathematical Physics},
  56(3):\href{https://doi.org/10.1063/1.4914369}{033503}, 2015.

\bibitem{kontsevich1993formal}
M.~Konttsevich.
\newblock Formal (non)-commutative symplectic geometry, {T}he {G}elfand
  {M}athematical {S}eminars, 1990--1992.
\newblock {\em Fields Institute Communications, Birkh{\"a}user Boston}, pages
  \href{http://refhub.elsevier.com/S0393--0440(21)00176--5/bibB49C690771DD4516155FC02AE516406Ds1}{173--187},
  1993.

\bibitem{Odesskii_Sokolov_2021}
A.~Odesskii and V.~Sokolov.
\newblock Noncommutative elliptic {P}oisson structures on projective spaces.
\newblock {\em Journal of Geometry and Physics},
  169:\href{https://www.sciencedirect.com/science/article/pii/S0393044021001765?via\%3Dihub}{104330},
  2021.
\newblock \href{https://arxiv.org/abs/1911.03320}{arXiv:1911.03320}.

\bibitem{okamoto1980polynomial}
K.~Okamoto.
\newblock Polynomial {H}amiltonians associated with {P}ainlev{\'e} equations,
  {I}.
\newblock {\em Proceedings of the Japan Academy, Series A, Mathematical
  Sciences}, 56(6):\href{https://doi.org/10.3792/pjaa.56.264}{264--268}, 1980.

\bibitem{Retakh_Rubtsov_2010}
V.~S. Retakh and V.~V. Rubtsov.
\newblock Noncommutative {T}oda {C}hains, {H}ankel {Q}uasideterminants and
  {P}ainlev{\'e} {II} {E}quation.
\newblock {\em Journal of Physics. A, Mathematical and Theoretical},
  43(50):\href{https://iopscience.iop.org/article/10.1088/1751--8113/43/50/505204}{505204},
  2010.
\newblock \href{https://arxiv.org/abs/1007.4168}{arXiv:1007.4168}.

\end{thebibliography}

\end{document}